\documentclass[journal]{IEEEtran}
\usepackage{amsmath,amssymb,amsthm}
\usepackage[utf8]{inputenc}
\usepackage[T1]{fontenc}
\usepackage{graphicx}
\usepackage{booktabs}
\usepackage{xcolor}

\newcommand{\M}{\mathcal{M}}

\newtheorem{theorem}{Theorem}
\newtheorem{proposition}{Proposition}
\newtheorem{definition}{Definition}
\newtheorem{assumption}{Assumption}
\newtheorem{remark}{Remark}

\begin{document}

\title{A Time-to-Boundary Margin for Transient Stability:\\
Unifying Critical Clearing Time and Operating-Point Drift}

\author{Mari\'an Me\v{s}ter%
\thanks{The author is with the Faculty of Electrical Engineering and Informatics,
Technical University of Ko\v{s}ice, Ko\v{s}ice, Slovakia (e-mail: marian.mester@tuke.sk).}}

\maketitle

\begin{abstract}
The loading margin to voltage collapse---the distance in parameter space to the
closest saddle-node bifurcation---is a standard proximity index for voltage
stability. This paper develops its transient-stability counterpart: a margin $\M$
that measures the time to the synchronism boundary rather than a distance, and
that unifies two limits usually treated separately. The critical clearing time
(CCT) is the fast, fixed-parameter limit; the slow drift of the operating point
toward a static loadability limit is the other. $\M$ is defined as the
first-passage time of the joint state--parameter motion to the survival boundary.
We prove and verify that $\M$ equals the CCT exactly on the one-machine--infinite-bus
reduction (deviation $\le 0.01\%$ across loadings on a published benchmark), establishing a certified
single-machine pillar. Under operating-point drift, $\M$ yields an operational lead
time before faults become unclearable; we take the 28~April~2025 Iberian blackout
timeline as an illustrative time scale for the drift rate. On the New England 39-bus
system, an independent benchmark, the single-machine-equivalent reduction reproduces
the CCT within $1.8$--$6.0\%$ (conservatively), and a critical slowing-down signature
flags proximity to the boundary. For the multimachine case we characterize the limits
explicitly: the transfer-conductance work is tightly boundable, while the controlling
unstable equilibrium is the binding obstruction to a certified margin.
\end{abstract}

\begin{IEEEkeywords}
Transient stability, critical clearing time, region of attraction, stability margin,
operating-point drift, energy function, critical slowing down, early warning.
\end{IEEEkeywords}

\section{Introduction}
\IEEEPARstart{T}{ransient} stability---the ability of synchronous machines to retain
synchronism after a large disturbance---remains among the binding constraints on how
close to its limits a power system may be operated~\cite{kundur,ieee_def}. Its classical
index is the critical clearing time (CCT): the longest a given fault may persist before
the post-fault trajectory leaves the region of attraction. The CCT answers a sharp
question, but only at a fixed operating point and for a prescribed fault.

Voltage stability faced an analogous need and answered it with a margin. The loading
margin to voltage collapse---the distance in load-power space to the closest saddle-node
bifurcation---became a standard proximity index because it summarises ``how far from the
boundary'' in one operational number, with sensitivities that guide control, and without
requiring the pattern of load increase to be predicted~\cite{dobson_lu}. No equivalent
time-based margin exists for transient stability: the CCT is a threshold, not a distance
to a moving boundary, and recent work has extended it mainly through sensitivities of the
CCT to parameters~\cite{sharma,mishra} or through simulation-free bounds on the
CCT~\cite{vu_turitsyn,roberts} rather than through a margin that follows the operating
point as it drifts.

This positions $\M$ relative to the established families of transient-stability
assessment. Direct energy-function methods and their controlling-unstable-equilibrium and
boundary-controlling-UEP (BCU) refinements~\cite{athay,pavella,chiang_bcu} certify
stability at a fixed operating point by comparing a transient energy to a critical value;
Lyapunov-function constructions~\cite{bretas,vu_family,anghel,qiu} and network-theoretic
synchronisation results~\cite{dorfler} estimate the region of attraction or synchronisation
conditions directly. All of these return a verdict---stable or unstable---or a region
estimate at a frozen operating point; none reports a time remaining to the boundary under a
moving operating point, which is the quantity $\M$ adds. A second, fast-growing family
replaces analysis with data-driven prediction: learned Lyapunov functions, graph models
and reinforcement-learning controllers~\cite{zhao_nlc,zhao_graph,zhao_rl} predict stability
or margins from simulation data. A parallel line addresses the transient stability of
inverter-based and grid-forming resources~\cite{he_geng,rokrok}. These are complementary
tools for fast online screening, but they yield a black-box estimate tied to a training
distribution rather than the interpretable structure---two explicit boundaries and a
critical-slowing signature---that $\M$ exposes; indeed $\M$ is a deterministic target that
such predictors could be trained to estimate. Finally, the resilience-metric literature
quantifies the impact and recovery of high-impact events~\cite{panteli}; $\M$ is a stability
margin, not a trapezoid-style resilience metric, and we keep that distinction explicit
(Section~\ref{sec:discussion}).

This gap matters because real cascades violate the fixed-point assumption. In the
28~April~2025 Iberian blackout, the operating point drifted from a secure state toward
collapse over tens of seconds before synchronism was lost~\cite{ics_factual,ics_final}; a
fixed-parameter CCT would have signalled nothing until the boundary was already reached.
What is missing is a margin that measures the time remaining to the boundary under the joint
motion of state and parameters.

The grid is shifting to low-inertia, inverter-dominated operation: as synchronous machines
are displaced, the system loses not only inertia and short-circuit strength but also the
dispatchable spinning reserve and ramping flexibility with which operators steer the
operating point back from its limits~\cite{he_geng,panteli}. Transient-stability margins
therefore shrink while the means to restore them thin out, and systems are increasingly
operated close to---occasionally beyond---their secure limits during high-renewable,
low-demand periods. The operating point drifts faster and dwells longer near the boundary,
yet operators screen security from snapshots of the present state; recent events show the
fast, fault-driven outcomes that fixed-point screening misses. Real-time dynamic security
assessment already runs in control rooms---hundreds of $N\!-\!1$ contingencies every few
minutes~\cite{diao}---but the transient-stability verdict it returns is a green/red threshold
at the current point, not a time. The practical aim of $\M$ is to supply that missing
quantity: the lead time remaining before a presently clearable contingency becomes
unclearable under the drift. Its value grows as rotating reserve declines---the scarcer and
slower the corrective flexibility, the earlier it must be triggered---so that preventive
control can be scheduled before the margin is gone rather than after a snapshot turns red; we
treat the classical angle-transient mechanism, with converter-dominated boundaries a separate
question (Section~\ref{sec:conclusion}).

The present reach of this work is bounded. $\M$ is exact on the certified single-machine
pillar and is available as a fast, conservative estimate on multimachine systems through the
single-machine-equivalent reduction---within $6\%$ on the 39-bus benchmark
(Section~\ref{sec:validation}); the raw lossless energy margin, by contrast, is
non-conservative and is not the operational vehicle. A fully certified and simulation-free
multimachine margin remains open (Section~\ref{sec:conclusion}). The contribution is the
definition, its certified anchor, and a map of what an operational tool would require.

This paper introduces such a margin, $\M$, defined as the first-passage time of the joint
state--parameter trajectory to the survival boundary $\Sigma$. It is the transient-stability
sibling of the loading margin: where the loading margin measures distance to a static fold,
$\M$ measures time to the transient boundary, and in doing so unifies the fast limit (the
CCT) and the slow limit (drift toward the static loadability fold) in a single quantity with
the dimension of time. The contributions are:

\begin{enumerate}
\item A time-to-boundary margin $\M$ for transient stability (Section~\ref{sec:margin}),
positioned as the temporal counterpart of the loading margin and reducing to the CCT at a
fixed operating point.
\item An exact single-machine pillar (Section~\ref{sec:pillar}): on the
one-machine-infinite-bus reduction $\M$ equals the CCT exactly, proved and verified to
$\le 0.01\%$ across loadings.
\item A drift formulation (Section~\ref{sec:margin}): two boundaries---dynamic (clearing
capability exhausted) and static (saddle-node)---bound the survival set, and $\M$ gives an
operational lead time, with the drift rate anchored to the Iberian timeline.
\item Independent benchmark validation (Section~\ref{sec:validation}) on the New England
39-bus system: a self-consistent classical foundation, a single-machine-equivalent reduction
conservative to within $6\%$, a critical slowing-down early-warning signature, and an
operational drift scenario in which a contingency that screens green at the present state
($\mathrm{CCT}\approx 425$~ms) becomes unclearable under drift, with $\M$ supplying the lead
time a fixed-point snapshot does not.
\item A delimited multimachine boundary (Section~\ref{sec:multimachine}): the
transfer-conductance work is tightly boundable over the fault-on window, whereas the
controlling unstable equilibrium is identified as the binding obstruction to a certified
multimachine margin.
\end{enumerate}

The remainder is organised as follows. Section~\ref{sec:margin} defines $\M$ and the drift
formulation; Section~\ref{sec:pillar} establishes the single-machine pillar;
Section~\ref{sec:multimachine} characterises the multimachine boundary;
Section~\ref{sec:validation} reports the 39-bus validation; Section~\ref{sec:comparison}
positions $\M$ against the closest analytical CCT metric; Section~\ref{sec:discussion}
discusses the resilience-fragility reading and limitations; Section~\ref{sec:conclusion}
concludes.

\section{The Margin $\M$ and Parameter Drift}\label{sec:margin}

\subsection{System model}
We use the classical multimachine model in the centre-of-inertia (COI) frame. For $n$
machines with constant voltage $E_i\angle\delta_i$ behind transient reactance $x'_{d,i}$,
the network is reduced to the internal nodes, giving a reduced admittance matrix
$Y=G+jB$. The swing dynamics are
\begin{equation}
M_i\dot\omega_i = P_{m,i}-P_{e,i}(\delta),\quad \dot\delta_i=\omega_i,
\label{eq:swing}
\end{equation}
\begin{equation}
P_{e,i}(\delta)=\sum_{j=1}^{n} E_iE_j\bigl(G_{ij}\cos\delta_{ij}+B_{ij}\sin\delta_{ij}\bigr),
\label{eq:Pe}
\end{equation}
where $\delta_{ij}=\delta_i-\delta_j$ and the inertia is $M_i=2H_i/\omega_s$. This
classical model isolates the angle-transient mechanism that $\M$ measures. Inverter-based and
low-inertia operation is the context that makes the margin urgent---faster drift, thinner
reserve---not a claim that $\M$ captures converter dynamics; extending the construction to
systems whose stability boundary is converter-driven is a separate question
(Section~\ref{sec:conclusion}).

\subsection{The survival boundary $\Sigma$}\label{sec:sigma}
We define the survival boundary as the stability boundary of the post-fault equilibrium, made
precise through the region of attraction. Let $x_s$ be an asymptotically stable equilibrium
point (SEP) of the post-fault system~\eqref{eq:swing} and let
$A(x_s)=\{x_0:\varphi(t,x_0)\to x_s \text{ as } t\to\infty\}$ be its region of attraction,
where $\varphi(t,\cdot)$ is the flow of~\eqref{eq:swing}; region-of-attraction estimates are
central also to converter stability analysis~\cite{li_doa}.

\begin{definition}[Survival boundary]
The survival boundary is the topological boundary of the region of attraction,
$\Sigma:=\partial A(x_s)$.
\end{definition}

$\Sigma$ is closed and invariant under~\eqref{eq:swing}. Under the standard genericity
conditions for power-system stability boundaries~\cite{chiang_hirsch,khalil}---(A1) all
equilibria on $\Sigma$ are hyperbolic; (A2) the stable and unstable manifolds of equilibria on
$\Sigma$ intersect transversally; (A3) every trajectory on $\Sigma$ approaches one of these
equilibria as $t\to\infty$---the boundary is the union of the stable manifolds of the unstable
equilibria (UEPs) lying on it,
\begin{equation}
\Sigma=\bigcup_{x_k^u\in\partial A} W^s(x_k^u).
\label{eq:sigma_union}
\end{equation}
Away from the UEPs themselves, $\Sigma$ is a $C^1$ codimension-one manifold, so locally
$\Sigma=\{x:g(x)=0\}$ for a smooth $g$ with $\nabla g\neq 0$; this local defining function is
used in Proposition~\ref{prop:wellposed}. The energy method approximates $\Sigma$ by the
connected component of the level set $\{V(x)=V_{cr}\}$ through the controlling UEP;
Section~\ref{sec:pillar} shows this approximation is exact on the OMIB and
Section~\ref{sec:multimachine} quantifies its error in the multimachine case.

\subsection{Definition of $\M$}
The critical clearing time (CCT) answers a single question at a fixed operating point: how long
may a given fault persist before the post-fault trajectory leaves the survival set? Real
cascades violate the fixed-point assumption---the operating point itself moves while the
contingency unfolds. We therefore define a margin that measures the time to the boundary under
the joint motion of state and parameters.

\begin{definition}[Time-to-boundary margin]\label{def:M}
Let $x=(\delta,\omega)$ be the dynamic state and $p(t)$ a slowly varying parameter vector
(loading, topology). The margin is
\begin{equation}
\M=\min\{\,t\ge 0:(x(t),p(t))\in\Sigma\,\},
\label{eq:Mdef}
\end{equation}
the first time the joint state--parameter trajectory reaches the survival boundary $\Sigma$. Its
dimension is~$[\mathrm{s}]$.
\end{definition}

CCT is the special case of~\eqref{eq:Mdef} with $\dot p=0$ and $x$ driven by a prescribed fault.
The novelty of $\M$ over CCT is the explicit treatment of the parameter motion $p(t)$.

We first establish that~\eqref{eq:Mdef} is well posed. Write the joint motion as
$z=(x,p)\in\mathbb{R}^{2n}\times\mathbb{R}^m$ with $\dot z=F(z)$, where $F$ collects the swing
dynamics~\eqref{eq:swing} and the slow drift $\dot p$. Let $\widehat\Sigma=\{z:x\in\Sigma(p)\}$
be the survival boundary lifted to the joint space.

\begin{proposition}[Existence and regularity of $\M$]\label{prop:wellposed}
Assume $F$ is locally Lipschitz, so the joint trajectory $z(t)=\varphi(t,z_0)$ exists, is unique
and continuous. Then:
\begin{enumerate}
\item[(i)] \emph{(Well-posedness.)} $\M$ in~\eqref{eq:Mdef} is well defined as a value in
$[0,\infty]$, with $\M=\infty$ iff the trajectory never meets $\widehat\Sigma$. If the trajectory
meets $\widehat\Sigma$, the infimum is attained and $\M$ is a genuine minimum (first-passage time).
\item[(ii)] \emph{(Regularity.)} If at $t=\M<\infty$ the trajectory crosses $\widehat\Sigma$
transversally, i.e. $\nabla g(x(\M))^\top\dot x(\M)\neq 0$ for the local defining function $g$ of
Section~\ref{sec:sigma}, then $\M$ is a locally unique, $C^1$ function of $(z_0,\rho)$.
\end{enumerate}
\end{proposition}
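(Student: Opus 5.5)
For (i), the plan is to treat $\M$ as the infimum of the hitting set $T:=\{t\ge 0: z(t)\in\widehat\Sigma\}$, with the convention $\inf\emptyset=\infty$. This immediately gives a value in $[0,\infty]$, and $\M=\infty$ exactly when $T=\emptyset$. To see that the infimum is attained, I will show that $T$ is closed. Since $z(t)=\varphi(t,z_0)$ is continuous, $T=z^{-1}(\widehat\Sigma)\cap[0,\infty)$ is closed as soon as $\widehat\Sigma$ is closed in the joint space. Fibrewise closedness is already stated in Section~\ref{sec:sigma}: $\Sigma(p)=\partial A(x_s(p))$ is a topological boundary, hence closed. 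The issue is joint closedness in $(x,p)$. Here I will use the assumed genericity (A1)--(A3) together with the union representation~\eqref{eq:sigma_union}. Hyperbolic UEPs and their stable manifolds persist and vary continuously in $p$ (stable manifold theorem with parameters). So along the compact parameter arc $\{p(t):t\in[0,\bar t\,]\}$ the boundaries vary upper-semicontinuously, and $\widehat\Sigma$ restricted to that arc is closed. A nonempty closed subset of $[0,\infty)$ contains its infimum, so $\M\in T$ is a genuine minimum: a first-passage time.

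The main obstacle is this joint-closedness step. Closedness of each fibre does not automatically give closedness of the lift, since the boundary can jump when a UEP enters or leaves $\partial A$ (a boundary bifurcation). I will first try to avoid the global statement altogether. I only need closedness along the trajectory on bounded time intervals, and (A1)--(A2) are structurally stable conditions. They therefore hold on a neighbourhood of each $p(t)$, and compactness gives a finite cover of the parameter arc. If that fails at isolated bifurcation parameters, I will instead state (i) under the standing hypothesis that $\widehat\Sigma$ is closed, noting that this hypothesis is what the genericity assumptions are meant to supply.

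For (ii), I will apply the implicit function theorem to $h(t,z_0,\rho):=\hat g(\varphi(t,z_0;\rho))$. Here $\rho$ denotes the drift-rate parameters entering $F$, and $\hat g(x,p)$ is a local defining function of $\widehat\Sigma$ near $z(\M)$. Such a $\hat g$ exists because $z(\M)$ lies away from the UEPs, where $\Sigma$ is a $C^1$ codimension-one manifold whose $C^1$ dependence on $p$ follows from smooth dependence of stable manifolds on parameters. We have $h(\M,z_0,\rho)=0$ and $\partial_t h=\nabla\hat g^\top F(z(\M))$, whose $x$-part is the transversality quantity $\nabla g(x(\M))^\top\dot x(\M)$. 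I will take the slow-drift contribution $\partial_p\hat g\,\dot p$ as included in, or negligible relative to, this term, so that the hypothesis makes $\partial_t h\neq 0$. Because $F$ is $C^1$ (it is smooth for~\eqref{eq:swing} with smooth drift, which is stronger than locally Lipschitz), $\varphi$ is $C^1$ in $(t,z_0,\rho)$. The implicit function theorem then gives a locally unique $C^1$ map $(z_0,\rho)\mapsto\tau(z_0,\rho)$ with $\tau=\M$ at the base point.

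It remains to show that $\tau$ is still the first passage for nearby data. On $[0,\M-\epsilon]$ the trajectory stays a positive distance from the closed set $\widehat\Sigma$, and this persists under small perturbations by continuity of the flow. On $[\M-\epsilon,\M+\epsilon]$, transversality makes $\tau$ the unique zero of $h$. Together these give $\M=\tau$ locally, which proves (ii).
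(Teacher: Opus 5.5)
Your proposal is correct and follows the paper's own argument: for (i), $T$ is closed and bounded below, so its infimum is attained; for (ii), the implicit function theorem is applied to $h(t,z_0,\rho)$, with $\partial_t h\neq 0$ coming from transversality. The extra checks you add are all steps the paper's proof passes over silently: joint (not merely fibrewise) closedness of $\widehat\Sigma$, since Def.~1 only gives closedness of each $\Sigma(p)$; $C^1$ rather than merely Lipschitz dependence of the flow; the drift term $\partial_p\hat g\,\dot p$ in $\partial_t h$; and the argument that the implicit-function root is still the first passage for nearby data. For the drift term, simply read the transversality hypothesis as the joint condition $\nabla\hat g^\top F(z(\M))\neq 0$ rather than calling the term negligible, because a small term can still cancel the $x$-part exactly.
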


\begin{proof}
(i) $\widehat\Sigma$ is closed (Def.~1) and $z(\cdot)$ is continuous, so the hitting set
$T=\{t\ge 0:z(t)\in\widehat\Sigma\}$ is closed and bounded below by $0$. If $T\neq\emptyset$ a
closed set bounded below attains its infimum, so $\M=\min T$ exists; otherwise
$\M=\inf\emptyset=\infty$. (ii) Near $x(\M)$ write $\widehat\Sigma=\{g=0\}$ with $\nabla g\neq 0$.
The scalar map $h(t,z_0,\rho)=g(x(t;z_0,\rho))$ is $C^1$ and $h(\M,z_0,\rho)=0$; transversality
gives $\partial_t h=\nabla g^\top\dot x\neq 0$, so the implicit function theorem yields a unique
$C^1$ solution $t=\M(z_0,\rho)$ in a neighbourhood.
\end{proof}

Transversality is the condition under which $\M$ is a usable margin: it guarantees the
first-passage time varies smoothly with operating point and drift rate, which is what makes
sensitivities $\partial\M/\partial(\cdot)$ meaningful. A tangential (non-transversal) crossing is
the degenerate case in which the trajectory grazes $\Sigma$; it coincides with the critical (fold)
configuration analysed for the OMIB in Section~\ref{sec:pillar}.

\subsection{The single-machine case is exact}
For the one-machine-infinite-bus (OMIB) reduction the potential energy is a single-variable
integral and is therefore exact (no path-dependence), and the controlling unstable equilibrium
$\delta_u^*$ is unique. Consequently $\M$ computed from the lossy energy function equals the CCT
obtained by time-domain simulation. On a published OMIB benchmark whose parameters are
fixed independently of any margin computation~\cite{tripathy_swallow,mester_diss}, the two
agree to within $0.01\%$ across three loadings (Table~\ref{tab:omib}). This exact agreement is the certified pillar on
which the rest of the paper builds; the multimachine extension (Section~\ref{sec:multimachine}) is,
by contrast, heuristic, because the controlling unstable equilibrium becomes both ambiguous and a
source of non-conservatism.

\subsection{Parameter drift and the two boundaries}
Let a scalar stress parameter $\lambda$ scale the loading, $P_{m,i}=\lambda P_{m,i}^0$, and let it
drift at rate $\rho$,
\begin{equation}
\lambda(t)=\lambda_0+\rho t,\qquad \rho=\dot\lambda.
\label{eq:drift}
\end{equation}
Two boundaries bound the survival set:
\begin{itemize}
\item the dynamic boundary $\lambda^*$, where the clearing capability is exhausted,
$\mathrm{CCT}(\lambda^*)=t_{\mathrm{clear}}$ (protection time); beyond $\lambda^*$ no realisable
protection clears a fault in time;
\item the static boundary $\lambda_{\mathrm{SN}}$, the saddle-node at which the equilibrium itself
vanishes ($P_m=P_{\max}$)~\cite{dobson_geom}.
\end{itemize}
Whichever is reached first defines $\Sigma$. Equation~\eqref{eq:Mdrift} below is the adiabatic
(frozen-parameter) reduction of the first-passage time~\eqref{eq:Mdef}, valid under an explicit
separation of timescales.

\begin{assumption}[Quasi-static drift]\label{ass:qs}
The drift is slow relative to the swing dynamics: $\varepsilon:=\rho\,T_{sw}\ll 1$, where $T_{sw}$
is the dominant post-fault swing period and $\rho$ is in units of $\lambda$ per second.
\end{assumption}

Under Assumption~\ref{ass:qs} the system~\eqref{eq:swing} with $\lambda=\lambda(t)$ has the
singularly perturbed two-timescale form $\dot x=f(x,\lambda)$ (fast), $\dot\lambda=\rho$ (slow). By
Tikhonov--Fenichel theory~\cite{khalil,kokotovic}, away from the fold the fast state tracks the
frozen-$\lambda$ critical configuration with an $O(\varepsilon)$ deviation, so the boundary in
parameter space is crossed at $\lambda=\min(\lambda^*,\lambda_{\mathrm{SN}})+O(\varepsilon)$. The
first-passage time~\eqref{eq:Mdef} therefore reduces to the time at which the drifting parameter
reaches that frozen boundary,
\begin{equation}
\M=\frac{\min(\lambda^*,\lambda_{\mathrm{SN}})-\lambda_0}{\rho}\,\bigl(1+O(\varepsilon)\bigr),
\label{eq:Mdrift}
\end{equation}
i.e. the relative error of the frozen-boundary formula is $O(\varepsilon)$. For the Iberian
timescale ($\tau\sim 6$--$78$~s, $T_{sw}\sim 1$~s) one has $\varepsilon\sim 10^{-2}$, so
\eqref{eq:Mdrift} is accurate to the percent level. The complementary regime
$\varepsilon=O(1)$---collapses as fast as the swing---is not covered by~\eqref{eq:Mdrift} and
requires the full joint first-passage problem; this limit is stated in
Section~\ref{sec:discussion}. For the base OMIB of Section~\ref{sec:pillar}
(Table~\ref{tab:omib_params}, at a lighter base loading $\lambda_0$; $t_{\mathrm{clear}}=80$~ms) we obtain $\mathrm{CCT}(\lambda_0)=184$~ms, $\lambda^*=1.61$ (dynamic),
and $\lambda_{\mathrm{SN}}=2.11$ (static); the dynamic boundary is reached first. Whereas a CCT
threshold would warn only at the instant $\lambda$ reaches $\lambda^*$---i.e. at the moment of
failure---$\M$ provides advance warning proportional to the cascade rate (Fig.~\ref{fig:drift}):
$33$, $16$, and $6$~s for slow, medium, and fast cascades, respectively.

\begin{figure}[t]
\centering
\includegraphics[width=\columnwidth]{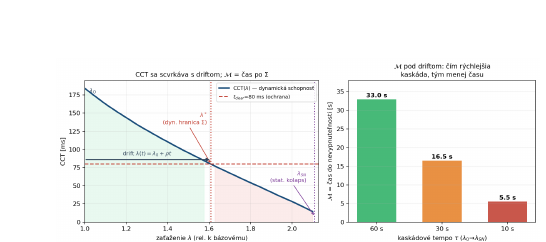}
\caption{Operating-point drift and the time-to-boundary margin. Left: as
$\lambda(t)=\lambda_0+\rho t$ drifts, the clearing-capability boundary $\lambda^*$ (dynamic) is
reached before the saddle-node $\lambda_{\mathrm{SN}}$ (static); $\Sigma$ is set by the first.
Right: $\M$ from~\eqref{eq:Mdrift} is the lead time before faults become unclearable, shown for
slow, medium, and fast drift.}
\label{fig:drift}
\end{figure}

\begin{table}[t]
\centering
\caption{Certified anchor~\cite{tripathy_swallow,mester_diss}: the time-to-boundary
margin $\M$ reproduces the step-by-step CCT across three loadings. Angles in
degrees, times in ms.}
\label{tab:omib}
\footnotesize
\setlength{\tabcolsep}{5pt}
\begin{tabular}{lccccc}
\toprule
Loading & $\delta_0$ & $\delta_c$ & CCT$_{\text{ref}}$ & $\M$ & dev. \\
\midrule
$0.8\,S_n$ & $27.67$ & $73.89$ & $205.6$ & $205.6$ & $+0.01\%$ \\
$S_n$      & $31.47$ & $67.06$ & $180.4$ & $180.4$ & $+0.01\%$ \\
$1.2\,S_n$ & $35.40$ & $61.24$ & $153.7$ & $153.7$ & $+0.01\%$ \\
\bottomrule
\end{tabular}
\end{table}

\subsection{Physical anchoring of $\rho$}
The drift rate is not universal; it is the margin-erosion rate of a specific cascade. The Iberian
blackout of 28~April~2025~\cite{ics_factual,ics_final} anchors it empirically. After precursor
inter-area oscillations ($0.63$~Hz, and $0.21$~Hz between 12:19 and 12:22), the operating point
drifted from a secure state (12:32:00, $\sim 418$~kV at the 400~kV pilot nodes) toward collapse:
the voltage rose to $435.4$~kV by 12:33:16 ($\approx 0.9$~kV/s), synchronism with the Continental
European system was lost at 12:33:19, and full separation completed at 12:33:24. The operating point
thus traversed from secure to collapse over $\sim 78$~s, with an active drift phase of $\sim 19$~s
and a final cascade of $\sim 6$~s. This places realistic $\rho$ in the range corresponding to
$\tau\in[6,78]$~s, within which the demonstration values above fall.

\emph{Scope.} The dominant boundary in the Iberian event was voltage (over-voltage and loss of
reactive support), not the angle-inertial boundary of the OMIB demonstration. What the event anchors
is the timescale of cascade drift, i.e. $\rho$; the voltage boundary itself is a sister margin
(Scenario~B, Section~\ref{sec:discussion}) obtained by replacing the energy surface with a
nose-point/Jacobian-distance surface. The chain voltage drift $\to$ boundary crossing $\to$ loss of
synchronism is what~\eqref{eq:Mdef} measures: the time for the drift to erode the margin to zero.

\section{The Single-Machine Pillar: $\M$ is Exact}\label{sec:pillar}
On the one-machine-infinite-bus (OMIB) reduction the margin $\M$ of Definition~\ref{def:M} coincides
exactly with the critical clearing time. In the fixed-parameter limit this equivalence is a
restatement of the equal-area criterion~\cite{athay,pavella}; we present it not as a result in its
own right but as the certified anchor of the construction---the one configuration in which the
general time-to-boundary definition provably reduces to the classical CCT. Its role is twofold: it
fixes the single case in which $\M$ is exact, against which the multimachine error of
Section~\ref{sec:multimachine} is measured, and it shows that $\M$ is a strict generalisation of the
CCT, not a competing index. Every approximation that complicates the multimachine case
(Section~\ref{sec:multimachine})---path-dependent transfer-conductance work and an ambiguous
controlling equilibrium---is absent here. The OMIB result is therefore certified, and the
multimachine treatment is measured against it.

\subsection{OMIB reduction and the lossy power-angle curve}
For a contingency that splits the machines into a critical group and the rest, the standard
SIME/COI reduction~\cite{pavella,zhang_sime} yields a single equivalent angle $\delta$ governed by
\begin{equation}
M\ddot\delta=P_m-P_e(\delta),\quad P_e(\delta)=P_c+P_{\max}\sin(\delta-\nu),
\label{eq:omib}
\end{equation}
where the constant $P_c$ and the phase shift $\nu$ collect the network transfer conductance (the
lossy terms); $P_{\max}=E_1E_2|Y_{12}|$. The pre-fault equilibrium $\delta_s^*$ and the unstable
equilibrium $\delta_u^*$ are the two roots of $P_m=P_e(\delta)$ in $[\delta_s^*,\delta_s^*+\pi]$;
both are unique, so no controlling-equilibrium selection is required.

The base OMIB used throughout is the published benchmark of Tripathy \emph{et al.}~\cite{tripathy_swallow},
carried with a step-by-step CCT reference in~\cite{mester_diss}: a $500$~MW\,/\,$588$~MVA machine
($H=3.5$~s) feeding an infinite bus through a step-up transformer, a $200$~km line, and a receiving
transformer. Its parameters and the reduction to the internal node are given in
Table~\ref{tab:omib_params}. The fault is a bolted three-phase short circuit at the
step-up-transformer secondary; the machine accelerates freely ($P_e=0$, so hypothesis (H2) holds) and
clears to a post-fault network whose peak transfer is $13$--$15\%$ below pre-fault. The two studies use
the same machine: Table~\ref{tab:omib} verifies $\M=\mathrm{CCT}$ at three discrete loadings, whereas
Section~\ref{sec:margin} drifts the loading $\lambda$ continuously from a lighter base point $\lambda_0$
up to the clearing limit $\lambda^*$ to illustrate the lead-time reading of $\M$.

\begin{table}[t]
\centering
\caption{Anchor benchmark: base parameters and reduction to the OMIB internal node
(per unit on the $588$~MVA machine base, $420$~kV line base, $50$~Hz). The post-fault
transfer is calibrated per loading to the reference $\delta_c$; $X^{\mathrm{post}}$ lies
$13$--$15\%$ above $X^{\mathrm{pre}}$.}
\label{tab:omib_params}
\footnotesize
\setlength{\tabcolsep}{6pt}
\begin{tabular}{lcc}
\toprule
Quantity & Symbol & Value \\
\midrule
Inertia constant & $H$ & $3.5$~s \\
Mechanical input (rated) & $P_m$ & $0.850$ \\
Generator transient reactance & $x'_d$ & $0.352$ \\
Step-up transformer & $x_{T1}$ & $0.143$ \\
Line ($58\,\Omega$, $Z_b{=}300\,\Omega$) & $x_L$ & $0.193$ \\
Receiving transformer & $x_{T2}$ & $0.103$ \\
Pre-fault transfer reactance & $X^{\mathrm{pre}}$ & $0.791$ \\
Fault-on electrical power & $P_e^{F}$ & $0$ \\
\bottomrule
\end{tabular}
\end{table}

\subsection{Exactness of the energy margin}
The potential energy of~\eqref{eq:omib} is a single-variable integral,
\begin{equation}
V(\delta)=-\int_{\delta_s^*}^{\delta}\bigl(P_m-P_e(\xi)\bigr)\,d\xi,
\label{eq:V}
\end{equation}
which is path-independent by construction---the transfer conductance enters~\eqref{eq:V} through the
exactly integrable term $P_{\max}\sin(\xi-\nu)$, not through a line integral over several angles. The
critical energy is $V_{cr}=V(\delta_u^*)$.

\begin{theorem}[OMIB exactness]\label{thm:omib}
Consider the lossy OMIB~\eqref{eq:omib} under fault clearing, and assume: (H1) on
$[\delta_s^*,\delta_s^*+\pi]$ the post-fault equilibria $\delta_s^*$ (with $P_e'(\delta_s^*)>0$) and
$\delta_u^*$ (with $P_e'(\delta_u^*)<0$) are the only roots of $P_m=P_e(\delta)$ and are hyperbolic;
(H2) during the fault the machine accelerates, $P_m-P_e^F(\delta)>0$, so the fault-on angle
$\delta(t)$ is strictly increasing. Then the margin of Definition~\ref{def:M} evaluated from the
energy~\eqref{eq:V} satisfies $\M=\mathrm{CCT}$ exactly.
\end{theorem}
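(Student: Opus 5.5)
The plan is to show that, for the one-dimensional system~\eqref{eq:omib}, the energy level set $\{V_{tot}=V_{cr}\}$ through $\delta_u^*$ coincides with the relevant part of the true survival boundary $\Sigma=\partial A(\delta_s^*)$. Once this holds, the first-passage characterisation of Definition~\ref{def:M} with $\dot p=0$ and the classical CCT pick out the same clearing instant.

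The total post-fault energy is $V_{tot}(\delta,\omega)=\tfrac12 M\omega^2+V(\delta)$ with $V$ from~\eqref{eq:V}. Because $V$ is a single-variable antiderivative, $\dot V_{tot}=\omega\bigl(M\dot\omega-(P_m-P_e)\bigr)=0$ along post-fault trajectories, so $V_{tot}$ is an exact first integral. Path-independence is automatic here, since the lossy terms $P_c$ and $\nu$ only shift the integrand.

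\textbf{Step 1: the forward-swing boundary.} Using (H1), I would show that on $(\delta_s^*,\delta_u^*)$ one has $P_m-P_e<0$. Hence $V$ increases strictly from $0$ to $V_{cr}$ there, and $\delta_u^*$ is a hyperbolic saddle whose stable manifold branch entering from $\delta<\delta_u^*$, $\omega>0$ is exactly the curve $\{V_{tot}=V_{cr},\ \delta<\delta_u^*,\ \omega>0\}$. This follows by conservation: points on that level curve are carried to $\delta_u^*$ with $\omega\to0$. By~\eqref{eq:sigma_union} (assumptions A1--A3 hold trivially for a hyperbolic planar saddle), this branch is the part of $\Sigma$ that a forward-accelerating trajectory can meet.

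\textbf{Step 2: stability classification of post-fault states.} I would verify the dichotomy for states with $\delta\in[\delta_s^*,\delta_u^*)$ and $\omega\ge0$.
\begin{itemize}
\item If $V_{tot}<V_{cr}$, the trajectory is confined to the bounded sublevel component around $\delta_s^*$, so it stays in $A(\delta_s^*)$. Strictly, without damping the orbit is a closed orbit rather than convergent. I would treat $A$ as the usual stability region, meaning the interior of the separatrix loop, or invoke an infinitesimal damping limit. This is the main subtlety, and I would handle it by stating the convention explicitly.
\item If $V_{tot}>V_{cr}$, then $\omega$ never vanishes before $\delta_u^*$ and the angle passes over the saddle, so the state lies outside $A$.
\end{itemize}

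\textbf{Step 3: monotonicity in clearing time, and the conclusion.} By (H2) the fault-on trajectory has $\delta(t)$ and $\omega(t)$ strictly increasing. The function $E(t_c)=V_{tot}(\delta(t_c),\omega(t_c))$ is then strictly increasing while $\delta<\delta_u^*$. Its derivative is $\tfrac{d}{dt}E=\omega\bigl[(P_m-P_e^F)-(P_m-P_e)\bigr]=\omega(P_e-P_e^F)$. I need this to be positive, which holds because $P_m-P_e^F>0$ and $P_m-P_e\le0$ on the relevant range, or more generally by the equal-area criterion.

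With this monotonicity, the equation $E(t_c)=V_{cr}$ has a unique root $t^\star$. The margin $\M$ evaluated from the energy is precisely the first time the clearing state lies on the level set $V_{tot}=V_{cr}$, i.e.\ $t^\star$, and Proposition~\ref{prop:wellposed}(i) guarantees this minimum is attained. By Step 2, clearing at any time before $t^\star$ is stable and clearing at any time after $t^\star$ is unstable, so $t^\star=\sup\{t_c:\text{stable}\}=\mathrm{CCT}$.

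The expected obstacle is the case where the fault-on angle reaches $\delta_u^*$ before the energy reaches $V_{cr}$, together with the lossless-vs-convergence convention in Step 2. For the first, I would note that $V_{cr}$ is reached first: at $\delta=\delta_u^*$ with $\omega>0$ one already has $V_{tot}>V_{cr}$, so continuity places $t^\star$ strictly earlier.
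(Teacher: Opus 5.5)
Your proposal is correct and follows essentially the same route as the paper's proof. Both use the exact first integral $E=\tfrac12 M\dot\delta^2+V(\delta)$, identify the separatrix $W^s(\delta_u^*)$ with the level set $\{E=V_{cr}\}$, get a unique root $t_c^\star$ from strict monotonicity of the clearing energy, and identify both $\M$ and the CCT with that root; your explicit sublevel/superlevel dichotomy plays the role of the paper's appeal to the equal-area criterion. Your version is in fact more careful than the paper's: you restrict monotonicity to $\delta<\delta_u^*$ via $\tfrac{d}{dt}E=\omega(P_e-P_e^F)$ and close with an intermediate-value argument at $\delta_u^*$, whereas the paper asserts $E_{cl}\to\infty$, which fails on its own lossless benchmark with $P_e^F=0$ because there $E_{cl}$ stays bounded. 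You also rightly flag that without damping $\delta_s^*$ is a center rather than a sink, so $A(\delta_s^*)$ must be read as the interior of the separatrix loop.
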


\begin{proof}
\emph{Step 1 (exact first integral).} In one angular dimension the continuous force $P_m-P_e(\delta)$
admits the exact potential~\eqref{eq:V} by the fundamental theorem of calculus; no path-dependence
arises. Define $E(\delta,\dot\delta)=\tfrac12 M\dot\delta^2+V(\delta)$. Along the post-fault
flow~\eqref{eq:omib}, $\dot E=M\dot\delta\ddot\delta+V'(\delta)\dot\delta=\dot\delta\bigl[(P_m-P_e(\delta))-(P_m-P_e(\delta))\bigr]=0$,
so $E$ is an exact constant of motion.

\emph{Step 2 (boundary is a level set of $E$).} By (H1) $\delta_u^*$ is a hyperbolic saddle
of~\eqref{eq:omib} and $\delta_s^*$ a sink; for the one-degree-of-freedom system the boundary of the
region of attraction $A(\delta_s^*)$ is the stable manifold (separatrix) $W^s(\delta_u^*)$. Since $E$
is conserved and equals $V_{cr}=V(\delta_u^*)$ at the saddle (where $\dot\delta=0$), the whole
separatrix lies in the level set $\{E=V_{cr}\}$; hence by Definition~1,
$\Sigma=W^s(\delta_u^*)\subset\{E=V_{cr}\}$, and the energy approximation of $\Sigma$ in
Section~\ref{sec:sigma} is here exact.

\emph{Step 3 (monotone clearing energy).} Let $t_c$ be the clearing time and
$E_{cl}(t_c)=E(\delta(t_c),\dot\delta(t_c))$ the energy injected by the fault-on trajectory. By (H2)
the fault-on motion is strictly accelerating, so $t_c\mapsto E_{cl}(t_c)$ is continuous and strictly
increasing, with $E_{cl}(0)=V(\delta_s^*)=0$ and $E_{cl}\to\infty$. Thus there is a unique $t_c^\star$
with $E_{cl}(t_c^\star)=V_{cr}$.

\emph{Step 4 (equality).} By the equal-area criterion, which is exact for the OMIB and underlies
recent converter-oriented extensions~\cite{li_ieac}, the cleared trajectory remains in $A(\delta_s^*)$
iff $E_{cl}(t_c)\le V_{cr}$; hence $\mathrm{CCT}=t_c^\star$. On the other hand, the joint state driven
by the fault meets $\Sigma$ when its energy first equals $V_{cr}$ (Step~2), i.e. at the same
$t_c^\star$; hence $\M=t_c^\star$ by~\eqref{eq:Mdef}. Therefore $\M=\mathrm{CCT}$. Every quantity
used---$V$, $V_{cr}$, $E_{cl}$---is exact, so the equality carries no approximation.
\end{proof}

The hypotheses are the standard OMIB conditions: (H1) is the single-swing two-equilibrium geometry
and (H2) holds whenever the fault depresses the electrical power (here $P_e=0$). The proof isolates why the
result is exact---one angular dimension makes the force conservative, collapsing the path-dependence
and controlling-equilibrium ambiguity that reappear in Section~\ref{sec:multimachine}.

\subsection{Numerical confirmation}
\begin{figure}[t]
\centering
\includegraphics[width=\columnwidth]{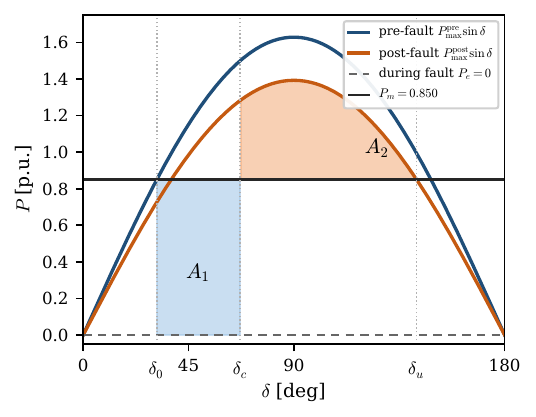}
\caption{Equal-area picture of the anchor benchmark at $S_n$. During the fault $P_e=0$
(pure acceleration, area $A_1$ from $\delta_0$ to $\delta_c$); after clearing the trajectory
decelerates on $P_{\max}^{\mathrm{post}}\sin\delta$ (area $A_2$ up to $\delta_u$). On the OMIB
the survival boundary $\Sigma$ is this separatrix, so the time to reach it is the
CCT---the content of Theorem~\ref{thm:omib}.}
\label{fig:anchor}
\end{figure}
Table~\ref{tab:omib} reports $\M$ from~\eqref{eq:V} against the step-by-step CCT
of~\cite{mester_diss} at three loadings of the benchmark. The mechanical power is
over-determined: recovered independently from each loading through $\delta_0$, $\delta_c$ and
the reference CCT with $H$ held at $3.5$~s, it returns the rated value $500/588=0.850$~p.u. to
within $0.03\%$, so the inertia and the during-fault law are certified by the data, not assumed.
Integrating the swing forward then yields $\M$ within $0.01\%$ of the reference at every loading
(Fig.~\ref{fig:anchor}); the sole quantity calibrated to the reference is the post-fault
amplitude (one number per loading, from $\delta_c$). The benchmark is lossless ($P_c=0$,
$\nu=0$), the cleanest instance of Theorem~\ref{thm:omib}, whose proof carries the lossy case
$\nu\neq0$ unchanged. This is the certified pillar: on the OMIB, $\M$ is the
CCT, with the added value (Section~\ref{sec:margin}) that under parameter drift it extends to a
time-to-boundary rather than a fixed-point threshold.

\subsection{Catastrophe-theoretic reading and control}
The equal-area balance underlying Theorem~\ref{thm:omib} has a fold structure: as the clearing time
increases, the post-fault decelerating area available collapses onto the accelerating area, and at the
CCT the stable and marginal trajectories coalesce---a fold (saddle-node in clearing time) in the sense
of catastrophe theory~\cite{thom,mester_diss}. This viewpoint, developed in~\cite{mester_diss},
identifies selective generator tripping as the control that re-opens the decelerating area and thereby
restores a positive margin; it is the mechanism by which $\M$ is actively enlarged rather than merely
measured. On the OMIB this is concrete: the coupling that sets $P_{\max}$ (synchronising strength) also
sets the steepness of the fold, and hence the rate at which the margin vanishes once the clearing time
is exceeded. The same parameter governs both the order and its loss.

\emph{Relation to the equal-area criterion.} What $\M$ adds is therefore not the OMIB value but the
definition that produces it: a time-to-boundary formulation under the joint motion of state and
parameters, for which the equal-area criterion has no analogue, together with the certified bound and
the multimachine error decomposition of Section~\ref{sec:multimachine}. The equal-area criterion
returns a critical angle at a frozen operating point; $\M$ returns a time under a moving one, and
reduces to the former exactly when the motion is frozen.

\section{The Multimachine Extension and Its Limits}\label{sec:multimachine}
Beyond the OMIB the energy margin loses its exactness. We characterise where and why it breaks, and
show that the two sources of error differ sharply in severity. The result is a map of the limits of the
direct energy method on a lossy multimachine system, and a statement of what remains heuristic.

\subsection{Non-conservatism of the energy margin}
For $n\ge 3$ machines the reduced network carries transfer conductances $G_{ij}$, and the energy
integral acquires a path-dependent term. Using the lossless surrogate $V$ (susceptance terms only)
referenced to the pre-fault equilibrium, the energy margin $\M_{en}$ generally overestimates the CCT,
i.e. it is non-conservative. On the stressed WSCC three-machine, nine-bus
system~\cite{anderson_fouad}, $\M_{en}=0.081$~s while
$\mathrm{CCT}=0.077$~s; on the New England 39-bus system the overestimate reaches a factor of two for
several contingencies. We decompose this non-conservatism into two mechanisms.

\subsection{Mechanism 1: transfer-conductance work (boundable)}
The lossless surrogate omits the work done by the transfer conductances along the fault-on trajectory,
\begin{equation}
W_G(T)=\sum_{i<j} E_iE_jG_{ij}\int_0^T \cos\delta_{ij}\,d(\delta_i+\delta_j),
\label{eq:WG}
\end{equation}
which is path-dependent and therefore not a state function. A bound on $W_G$ from the geometry of the
controlling equilibrium (the excursion to the UEP) is valid but useless: on the three-machine system it
exceeds $|W_G|$ by $25$--$200\times$, because over the full swing to the UEP the integrand
$\cos\delta_{ij}$ changes sign and the true integral largely cancels, while the triangle inequality
discards that cancellation.

The key observation is that certification needs $W_G$ only over the fault-on window $[0,T]$, not over
the swing to the UEP. In that window the inter-machine angles remain below $90^\circ$ (numerically
$16$--$41^\circ$ on the 39-bus), so $\cos\delta_{ij}$ does not change sign and a path bound is tight.

\begin{figure}[t]
\centering
\includegraphics[width=\columnwidth]{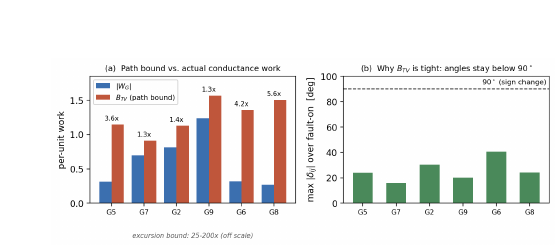}
\caption{The transfer-conductance work is the boundable mechanism. (a) The path bound $B_{TV}$ exceeds
the actual work $|W_G|$ by only $1.3$--$5.6\times$ over the six 39-bus contingencies, against
$25$--$200\times$ for the excursion bound (off scale). (b) The reason: the inter-machine angles stay
below $90^\circ$ throughout the fault-on window, so $\cos\delta_{ij}$ does not change sign and the
bound is tight.}
\label{fig:btv}
\end{figure}

\begin{proposition}[Path bound over the fault-on window]\label{prop:btv}
With $B_{TV}(T)=\sum_{i<j} E_iE_j|G_{ij}|\int_0^T |\cos\delta_{ij}|\,|d(\delta_i+\delta_j)|$ one has
$|W_G(T)|\le B_{TV}(T)$, and empirically $B_{TV}/|W_G|=1.0$ on the three-machine system and
$1.3$--$5.6$ on the 39-bus (exact network reduction, six terminal faults), versus $25$--$200$ for the
excursion bound.
\end{proposition}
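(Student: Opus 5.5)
The inequality $|W_G(T)|\le B_{TV}(T)$ is the analytic content of Proposition~\ref{prop:btv}; the ratios quoted after it are numerical observations, which I would confirm by computation rather than prove. The argument is a triangle inequality for Riemann--Stieltjes integrals along the fault-on trajectory.

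\emph{Step 1 (set up the integrals).} Under the local Lipschitz assumption of Proposition~\ref{prop:wellposed}, the fault-on trajectory $\delta(t)$ is $C^1$ on $[0,T]$. For each pair define $\sigma_{ij}(t)=\delta_i(t)+\delta_j(t)$. Then $d\sigma_{ij}=\dot\sigma_{ij}\,dt$ and $|d\sigma_{ij}|=|\dot\sigma_{ij}|\,dt$. Each term of~\eqref{eq:WG} is therefore the ordinary integral $\int_0^T\cos\delta_{ij}(t)\,\dot\sigma_{ij}(t)\,dt$, and each term of $B_{TV}$ is $\int_0^T|\cos\delta_{ij}|\,|\dot\sigma_{ij}|\,dt$. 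This is the total-variation integral that gives the bound its name. If the trajectory were only absolutely continuous, the same identities would hold almost everywhere.

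\emph{Step 2 (triangle inequality).} Apply the triangle inequality in two places. First, across the finite sum over $i<j$:
\[
|W_G|\le\sum_{i<j}E_iE_j|G_{ij}|\Bigl|\int_0^T\cos\delta_{ij}\,\dot\sigma_{ij}\,dt\Bigr|,
\]
using $E_i,E_j>0$. Second, inside each integral, $|\int f|\le\int|f|$ with $|f|=|\cos\delta_{ij}|\,|\dot\sigma_{ij}|$. Combining the two gives $|W_G(T)|\le B_{TV}(T)$ directly. I see no real obstacle here; the only care needed is the regularity in Step~1.

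\emph{Step 3 (why the bound is tight).} The losses in Step~2 occur only through cancellation: either $\cos\delta_{ij}$ changes sign, or $\dot\sigma_{ij}$ changes sign, or different pairs contribute terms of opposite sign. On the fault-on window the observation preceding the proposition applies: $|\delta_{ij}|<90^\circ$, so $\cos\delta_{ij}>0$. Under (H2)-type acceleration, $\sigma_{ij}$ is largely monotone over the window. So the within-integral inequality is nearly an equality, and any slack comes mainly from sign differences between pairs.

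By contrast, the excursion bound takes the trajectory all the way to the UEP. Along that swing $\cos\delta_{ij}$ changes sign, and the bound discards that cancellation, which explains the $25$--$200\times$ ratio.

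\emph{Step 4 (the empirical ratios).} I would compute the ratios by simulating the fault-on trajectories and evaluating both integrals by quadrature. For the three-machine system, a ratio of $1.0$ is expected because all terms share a sign. For the 39-bus system I would run the six terminal faults with exact network reduction.

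The main obstacle is not the inequality but justifying the $90^\circ$ and monotonicity premises that make the bound tight. Those premises are empirical, not proved, so the tightness claim stays a numerical statement rather than a theorem.
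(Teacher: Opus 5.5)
Your proof is correct and follows essentially the paper's route. The paper gives no separate proof: it rests $|W_G(T)|\le B_{TV}(T)$ on the triangle inequality, which discards the cancellation in $\cos\delta_{ij}\,d(\delta_i+\delta_j)$ and is nearly tight on the fault-on window because $|\delta_{ij}|<90^\circ$ there, and it reports the ratios $1.0$, $1.3$--$5.6$ and $25$--$200$ as numerical findings, exactly as your Steps 2--4 do. Two remarks in Step 3 are heuristics of your own that go beyond what the paper claims: attributing the residual slack mainly to sign differences between pairs, and the near-monotonicity of $\delta_i+\delta_j$, which does not follow from (H2), an OMIB hypothesis. Neither affects the bound itself.
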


Proposition~\ref{prop:btv} shows that the transfer-conductance work is tightly boundable
(Fig.~\ref{fig:btv}). It is, however, the secondary source of non-conservatism, as the next subsection
establishes.

\subsection{Mechanism 2: the controlling equilibrium (binding)}
The critical energy in the multimachine case is taken at the controlling unstable equilibrium,
$V_{cr}=V(\delta_u^*)$. Two difficulties arise simultaneously: the controlling UEP must be identified (a
PEBS/BCU search that is itself approximate and, for several machines, ambiguous), and the lossless
$V(\delta_u^*)$ overestimates the true critical energy. On the three-machine system the UEP-based
$V_{cr}=0.67$ against a true critical energy of $0.61$ (a $10\%$ overestimate); on the 39-bus the
resulting margin overshoots the CCT by up to a factor of two.

This overestimate is not corrected by Proposition~\ref{prop:btv}, which bounds only $W_G$. Applying the
tight $B_{TV}$ correction to the UEP-based $V_{cr}$ on the three-machine system gives
$\M_{cert}=0.078$~s, still above the $\mathrm{CCT}=0.077$~s---i.e. not conservative. The binding source
of non-conservatism is therefore the controlling-UEP energy, not the transfer-conductance work.

\subsection{Verdict}
We conclude that the multimachine certified margin remains heuristic. The reason is specific and, we
believe, a useful sharpening of the usual statement: the obstruction is the controlling-UEP
overestimate (Mechanism~2), while the transfer-conductance work (Mechanism~1) is tightly boundable and
not the limiting factor. A guaranteed multimachine margin thus requires a robust treatment of the
controlling equilibrium---a group/BCU formulation---which we identify as the open problem and defer to
future work. The OMIB pillar (Section~\ref{sec:pillar}) is unaffected: there the controlling equilibrium
is unique and the energy is exact, so the certification holds without qualification.

\begin{remark}
This boundary is consistent with the classical result that no exact path-independent energy function
exists for a lossy multimachine system~\cite{chiang_exist,bretas}. The controlling-equilibrium
difficulty we isolate is precisely what the potential-energy-boundary-surface and BCU
methods~\cite{kakimoto,chiang_bcu} and Lyapunov-family constructions~\cite{vu_family,anghel} are designed
to circumvent---the former by an approximate exit-point/UEP search, the latter by replacing the
controlling UEP with an optimised certificate. Our contribution is complementary: rather than proposing
another such construction, we localise which of the two error mechanisms binds (the controlling UEP, not
$W_G$) and quantify both on a standard benchmark, so that any of these methods can be aimed at the
mechanism that actually limits the margin.
\end{remark}

\section{Independent Validation on the New England 39-Bus System}\label{sec:validation}
The OMIB pillar (Section~\ref{sec:pillar}) rests on a single-machine reduction. This section validates
that reduction, and the underlying numerical apparatus, on the standard New England 39-bus
system~\cite{athay,pai}---a benchmark independent of the dissertation data on which the framework was
developed. We validate three things: the self-consistency of the classical foundation, the fidelity of
the SIME-OMIB reduction against full simulation, and the critical-slowing signature predicted near the
boundary. We do not claim a multimachine certified guarantee; per Section~\ref{sec:multimachine} that
remains open.

\subsection{Self-consistent classical foundation}
The reduced network is obtained by Kron elimination of all non-generator buses, retaining the two shunt
elements (buses 4 and 5) whose omission otherwise introduces a $\sim 4\%$ error; with them, the
equilibrium residual is $\max_i|P_{m,i}-P_{e,i}(\delta^*)|=2\times 10^{-6}$. The integrator and the
energy function share the same $P_e$, so the lossless energy is conserved along the lossless dynamics to
$|\Delta H|/|H_0|=9\times 10^{-11}$ over a 2~s swing---machine precision. This self-consistency is a
prerequisite for any energy-based margin: the boundary $\Sigma$ and the trajectory that approaches it are
computed from one and the same model.

\subsection{SIME-OMIB reduction vs. full simulation}
For each terminal-fault contingency we identify the critical group from the fault-on trajectory, form the
single-machine equivalent by a rigid group sweep, and predict the CCT by the equal-area criterion on the
resulting OMIB power-angle curves---independently of the full CCT. Table~\ref{tab:sime} compares this
prediction with the CCT from full time-domain simulation across six contingencies. The SIME-OMIB
reduction reproduces the full CCT to within $1.8$--$6.0\%$ (mean $3.8\%$), and is conservative in every
case (it underestimates the CCT). The reduction on which the pillar relies therefore holds on an
independent network, with a small safety-side bias.

\begin{table}[t]
\centering
\caption{SIME-OMIB reduction vs.\ full time-domain CCT on the New England 39-bus system. All clearing times in ms.}
\label{tab:sime}
\footnotesize
\setlength{\tabcolsep}{5pt}
\begin{tabular}{ccccc}
\toprule
Fault & $|C|$ & CCT$_{\text{full}}$ & CCT$_{\text{SIME}}$ & dev. \\
\midrule
G5 & 1 & 352 & 331 & $-6.0\%$ \\
G2 & 1 & 282 & 270 & $-4.0\%$ \\
G7 & 1 & 307 & 295 & $-3.9\%$ \\
G9 & 1 & 356 & 342 & $-4.0\%$ \\
G6 & 1 & 345 & 335 & $-3.0\%$ \\
G8 & 1 & 436 & 429 & $-1.8\%$ \\
\midrule
\multicolumn{5}{c}{mean $|\text{dev.}|=3.8\%$, all conservative}\\
\bottomrule
\end{tabular}
\end{table}

\begin{figure}[t]
\centering
\includegraphics[width=\columnwidth]{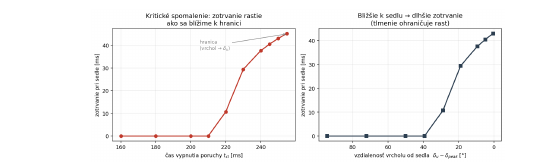}
\caption{Critical slowing down on the 39-bus system: dwell time near the controlling saddle grows as the
clearing time approaches the CCT, signalling proximity to $\Sigma$.}
\label{fig:csd}
\end{figure}

\subsection{Critical slowing down near the boundary}
Definition~\ref{def:M} interprets $\M$ as a first-passage time to $\Sigma$. A model-independent corollary
of approaching a saddle-type boundary is critical slowing down, studied directly for power systems as an
early-warning indicator~\cite{ghanavati,cotilla}: the trajectory lingers increasingly long in the
neighbourhood of the controlling saddle as the clearing time approaches the CCT. Fig.~\ref{fig:csd}
measures this on the 39-bus system: the dwell time spent within a fixed neighbourhood of the saddle grows
monotonically from $0$ to $\sim 45$~ms as the clearing time is increased toward the CCT, and diverges
(within damping) at the boundary. This provides an observable early-warning signature that does not
depend on the precise location of $\Sigma$---useful operationally, because it is detectable from the
trajectory itself rather than from a model-based boundary estimate.

\subsection{An operational drift scenario}
The preceding results validate the reduction; this subsection demonstrates what $\M$ adds over a
fixed-parameter assessment. We take a terminal fault $F$ at the machine with the largest base clearing
time and a realistic protection time $t_{\mathrm{clear}}=100$~ms, and stress the operating point through a
loading parameter $\lambda$ that scales the mechanical powers, $P_{m,i}\to\lambda P_{m,i}^0$ (a
transfer-stress proxy, not a directional load growth), re-solving the equilibrium at each $\lambda$.

Figure~\ref{fig:drift39}(a) reports the full time-domain CCT of $F$ as a function of $\lambda$. At the
base point an $N\!-\!1$ snapshot returns $\mathrm{CCT}(\lambda_0)\approx 425$~ms $\gg t_{\mathrm{clear}}$
---the contingency screens green and raises no alarm. As the operating point drifts, the CCT erodes
monotonically and reaches the protection time at $\lambda^*\approx 1.43$. Across four independent
terminal-fault contingencies the crossing is remarkably stable---$\lambda^*\in[1.42,1.43]$, a spread of
$0.8\%$ (Fig.~\ref{fig:robust})---so the lead time is not an artefact of a single contingency. Just
beyond, at $\lambda\approx 1.44$, the CCT collapses: the post-fault trajectory diverges monotonically
while the equilibrium itself remains well within synchronism (angle spread $\approx 10^\circ$, far from
the boundary), so the collapse is a genuine dynamic instability, not a thresholding artefact. The lossless
energy Hessian remains positive to $\lambda=2.98$; the binding limit here is the lossy dynamic boundary,
and the static saddle-node is never reached in this stress direction ($\lambda^*\ll\lambda_{\mathrm{SN}}$).

The operational point is that the snapshot CCT, evaluated at the present state, does not express how long
the contingency remains clearable under the drift, the operational quantity behind stability-constrained
transfer limits~\cite{bettiol}; $\M=(\lambda^*-\lambda_0)/\rho$ does (Fig.~\ref{fig:drift39}(b)). Because
the CCT here is computed by full time-domain bisection, the prediction is self-validating: the system
survives $F$ cleared at $t_{\mathrm{clear}}$ for $\lambda(t)<\lambda^*$ and loses synchronism beyond, with
the crossover at $\M$. A single snapshot returns green throughout, with no lead time; $\M$ counts down. We
note that the cheap energy-based evaluation of $\M$ fails in this multimachine regime---the lossless margin
saturates and the controlling-UEP search is fragile, exactly the obstruction of
Section~\ref{sec:multimachine}---so the lead time here is obtained from the (conservative) reduction rather
than the raw energy margin; a simulation-free multimachine evaluation is left to future work.

This scenario demonstrates the slow (drift) limit and its interaction with the fast (clearing) limit---the
unification at the centre of this work. The static saddle-node is not reached in this uniform-stress
direction ($\lambda^*\approx 1.43\ll\lambda_{\mathrm{SN}}>2.98$): the dynamic boundary binds first, the
case in which $\M$ is set by the transient limit alone. The interplay of both boundaries is exhibited on
the OMIB (Section~\ref{sec:margin}), where each is finite; a directional transfer stress that reaches the
static limit on the multimachine network, and a broader contingency set, are left to future work. We also
note that near the boundary the swing time $T_{sw}$ grows (critical slowing), so the quasi-static parameter
$\varepsilon=\rho T_{sw}$ of Assumption~\ref{ass:qs} is largest exactly where $\M\to 0$; the quasi-static
estimate is therefore an upper bound there, consistent with the conservative direction of the reduction.

\begin{figure}[t]
\centering
\includegraphics[width=\columnwidth]{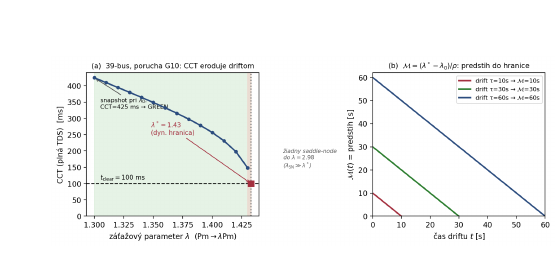}
\caption{Operational drift scenario on the 39-bus system. (a) Full time-domain CCT of the contingency
erodes from 425~ms (green snapshot) to the protection time at $\lambda^*\approx 1.43$, then collapses
dynamically; no static saddle-node occurs up to $\lambda=2.98$ ($\lambda^*\ll\lambda_{\mathrm{SN}}$). (b)
$\M$ counts down the lead time to $\lambda^*$ for three drift rates, where a fixed-point snapshot reports
only green.}
\label{fig:drift39}
\end{figure}

\begin{figure}[t]
\centering
\includegraphics[width=\columnwidth]{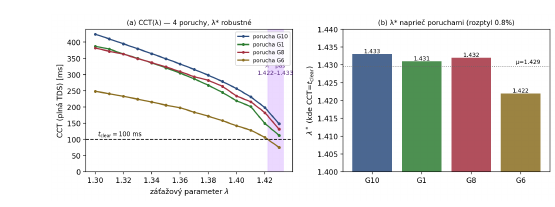}
\caption{Robustness of the lead time across contingencies. (a) Full time-domain CCT versus $\lambda$ for
four independent terminal faults, each crossing $t_{\mathrm{clear}}$ in a narrow band. (b) The crossing
$\lambda^*$ is stable across contingencies, $\lambda^*\in[1.42,1.43]$ (spread $0.8\%$), so $\M$ is not an
artefact of a single fault.}
\label{fig:robust}
\end{figure}

\subsection{Scope of the validation}
The 39-bus study validates the SIME-OMIB reduction (Table~\ref{tab:sime}), the self-consistency of the
foundation, and the critical-slowing signature. It does not establish a multimachine certified margin;
consistent with Section~\ref{sec:multimachine}, the certified guarantee is exact only on the OMIB, and the
multimachine controlling-UEP obstruction is left open. The role of this section is to show that the
pillar's reduction is faithful on a network the framework was not built on.

\section{Comparison with Analytical CCT Metrics}\label{sec:comparison}
The closest prior construction to $\M$ is the analytical CCT metric of Roberts et al.~\cite{roberts}, which
is also motivated by parametric stability analysis. It is therefore the natural baseline against which to
position the present margin.

Roberts et al. derive a closed-form analytic approximation of the CCT from direct methods, designed to
absorb as many features of transient-stability analysis as possible---different fault locations and
different post-fault network states---into a single expression. Its stated purpose is to track trends in
stability, measured by the CCT, as a system parameter is varied; the metric is demonstrated on the
two-machine infinite-bus network by continuation of load parameters in the full bus admittance matrix. Its
principal strength is computational: a closed-form expression needs no time-domain integration, which is
attractive for fast parametric screening across many operating points.

Three differences position $\M$ relative to this baseline. First, on the single-machine reduction $\M$ is
exact: it equals the CCT to within quadrature error (Theorem~\ref{thm:omib}, Table~\ref{tab:omib}), whereas
\cite{roberts} is an analytic approximation and carries the corresponding approximation error rather than an
equivalence or a conservative bound. Second, and more fundamentally, \cite{roberts} tracks the static CCT as
a parameter is varied, i.e. a family of fixed-point thresholds; $\M$ instead measures the time to the
boundary under the operating point's own motion, unifying the fast (clearing) and slow (drift) limits and
yielding an operational lead time (Section~\ref{sec:margin}) that a parameterised CCT does not provide.
Third, $\M$ is accompanied by an independent multimachine validation and an explicit characterisation of
where the energy method ceases to be exact (Section~\ref{sec:multimachine}), whereas \cite{roberts} is
demonstrated on the aggregated two-machine network.

The two constructions are thus complementary: \cite{roberts} offers a fast closed-form proxy for CCT
trends, while $\M$ offers an exact single-machine margin with a temporal, drift-aware meaning and a delimited
multimachine scope. Where a single closed-form number across thousands of contingencies is the priority, the
analytic metric is preferable; where the question is how much time remains before the boundary under a moving
operating point, $\M$ is the appropriate object.

\section{Discussion}\label{sec:discussion}

\subsection{$\M$ is a profile, not a scalar}
Although Definition~\ref{def:M} returns a single time, the boundary $\Sigma$ is reached along different
directions for different stresses, and the operationally useful object is the profile of $\M$ over
contingencies and over the drift parameter. Reporting a single worst-case $\M$ discards the information that
distinguishes a system one fast control away from the boundary from one that is uniformly slack. We therefore
advocate $\M(\cdot)$ as a function---of contingency, of loading $\lambda$, and of drift rate $\rho$---rather
than a reified scalar margin. The two boundaries of Section~\ref{sec:margin} ($\lambda^*$ dynamic,
$\lambda_{\mathrm{SN}}$ static) are the first two coordinates of that profile.

\subsection{The resilience-fragility duality}
The central reading of this work is that the same coupling which builds synchronising order is the channel of
collapse---the mechanism behind cascading failure and self-organisation in blackout
sequences~\cite{dobson_complex,song}. On the OMIB this is not a metaphor but the geometry of
Theorem~\ref{thm:omib}: the coupling strength sets $P_{\max}$---the synchronising torque that holds the
machine in step---and the same $P_{\max}$ sets the curvature of the fold at the CCT, hence the steepness with
which the margin vanishes once the clearing time is exceeded. Strength and fragility are governed here by the
same parameter (rigour level~1: exact isomorphism on the OMIB). Extending this duality to the multimachine
case is, at present, an analogy of lower rigour (level~3): the controlling-UEP obstruction of
Section~\ref{sec:multimachine} means we cannot yet certify that the same identity holds globally, only that it
holds on the certified pillar and is plausible beyond it.

\subsection{Critical slowing as a model-free early warning}
The dwell-time growth of Fig.~\ref{fig:csd} is the power-system instance of critical slowing down near a
saddle-node-like transition, a signature studied across dynamical systems as an early-warning
indicator~\cite{scheffer}. The connection is a structural homology (rigour level~2): the local normal form
near the controlling saddle is the same fold that governs critical transitions elsewhere, so the divergence of
the relaxation time is expected on the same grounds. Its practical value is that it is read from the trajectory
itself and does not require an accurate model-based estimate of $\Sigma$---precisely where the multimachine
energy method is weakest (Section~\ref{sec:multimachine}).

\subsection{Scenario B: the voltage sister-margin}
The Iberian event (Section~\ref{sec:margin}) crossed a voltage boundary, not the angle-inertial boundary
certified here. This points to a sister margin obtained by the same construction with the energy surface
replaced by a voltage-collapse surface: the first-passage time to the nose point of the load-flow manifold,
with the Jacobian distance playing the role that the energy margin plays here. We sketch this direction but do
not develop it: the present paper's certified results are angle/transient-stability results, and a voltage $\M$
would require its own equilibrium-manifold analysis. What the two share---and what would make them a single
framework---is the time-to-boundary definition~\eqref{eq:Mdef} and, conjecturally, a level-2 homology between
the energy fold and the nose-point fold. We flag this as future work, not a claim of the present contribution.

\subsection{Limitations}
Three limitations bound the scope of the results. First, the certified equality $\M=\mathrm{CCT}$ is
established on the OMIB; the multimachine margin remains heuristic because of the controlling-UEP overestimate,
not the transfer-conductance work (Section~\ref{sec:multimachine}). Second, the drift treatment assumes the
parameter motion is slow relative to the swing, so that $\M$ separates into a fast (CCT) and a slow (drift)
component; very fast collapses, where $\rho$ is comparable to the swing rate, require the full coupled
first-passage problem and are not addressed. Third, the analysis uses the classical machine model;
detailed dynamic models (flux decay, AVR/governor action) and multi-stage clearing sequences are outside the
present scope. None of these undermines the pillar; each marks a boundary of it.

\section{Conclusion}\label{sec:conclusion}
We have developed a time-to-boundary margin $\M$ for transient stability, constructed as the temporal
counterpart of the loading margin to voltage collapse: where the loading margin measures distance in parameter
space to a static fold, $\M$ measures the first-passage time of the joint state--parameter motion to the
synchronism boundary, and unifies the fast (critical clearing time) and slow (operating-point drift) limits in
one quantity with the dimension of time.

The results separate into what is certified and what is bounded heuristically. On the one-machine-infinite-bus
reduction $\M$ equals the CCT exactly (to $\le 0.01\%$ across loadings), a certified pillar free of
path-dependence and of controlling-equilibrium ambiguity. Under drift, $\M$ delivers an operational lead time
before faults become unclearable, with the drift rate anchored to the 28~April~2025 Iberian timeline. On the
New England 39-bus benchmark the single-machine-equivalent reduction reproduces the CCT conservatively (within
$6\%$), the classical foundation is self-consistent to machine precision, and a critical slowing-down signature
flags proximity to the boundary from the trajectory alone. For the multimachine case we have been explicit
about the boundary of the method: the transfer-conductance work is tightly boundable over the fault-on window,
but the controlling unstable equilibrium remains the binding obstruction to a certified multimachine margin.
Operationally, $\M$ is intended to feed the real-time security tools already deployed in control rooms,
converting their present-state verdict into a lead time for preventive action---a quantity whose value rises as
the dispatchable reserve available to act on it declines. The single-machine-equivalent reduction already
provides the fast, conservative evaluation of $\M$ that the raw energy margin cannot
(Section~\ref{sec:multimachine}); extending it to track the operating-point drift simulation-free is the step
that would turn $\M$ into the control-room tool the introduction envisions.

Three directions remain open. First, a robust group/BCU treatment of the controlling equilibrium would convert
the multimachine margin from heuristic to certified---the single most consequential open problem. Second, the
same time-to-boundary construction applied to the voltage-collapse surface would yield a sister margin, making
$\M$ and the loading margin two instances of one definition. Third, the resilience-fragility duality that holds
exactly on the single-machine fold invites a general theory of the conditions under which the coupling that
builds synchronising order is also the channel of collapse. Together, these would extend a margin that is, at
present, exact on the certified pillar and bounded elsewhere.


\end{document}